\documentclass[a4paper,UKenglish,cleveref,autoref,thm-restate]{lipics-v2021}

\title{The Price of Universal Temporal Reachability} 

\author{Binh-Minh Bui-Xuan}{LIP6, CNRS, Sorbonne Universit\'e, France}{buixuan@lip6.fr}{}{}
\author{Nhat-Minh Nguyen}{LIP6, CNRS, Sorbonne Universit\'e, France}{nhat-minh.nguyen@lip6.fr}{}{}
\author{S\'ebastien Tixeuil}{LIP6, CNRS, Sorbonne Universit\'e, France}{sebastien.tixeuil@lip6.fr}{}{}
\author{Yukiko Yamauchi}{Department of Informatics, Kyushu University, Japan}{yamauchi@inf.kyushu-u.ac.jp}{}{}

\authorrunning{Binh-Minh Bui-Xuan, Nhat-Minh Nguyen, S\'ebastien Tixeuil, and Yukiko Yamauchi} 
\Copyright{Binh-Minh Bui-Xuan, Nhat-Minh Nguyen, S\'ebastien Tixeuil, and Yukiko Yamauchi} 

\ccsdesc[500]{Theory of computation~Network formation}
\ccsdesc[300]{Mathematics of computing~Paths and connectivity problems}

\keywords{dynamic network, game theory, Nash equilibrium} 

\category{} 

\relatedversion{} 

\nolinenumbers 

\EventEditors{John Q. Open and Joan R. Access}
\EventNoEds{2}
\EventLongTitle{42nd Conference on Very Important Topics (CVIT 2016)}
\EventShortTitle{CVIT 2016}
\EventAcronym{CVIT}
\EventYear{2016}
\EventDate{December 24--27, 2016}
\EventLocation{Little Whinging, United Kingdom}
\EventLogo{}
\SeriesVolume{42}
\ArticleNo{23}

\usepackage{stmaryrd}
\usepackage{amsmath}
\usepackage{tikz}
\usepackage[textsize=small]{todonotes}

\usepackage{bm}
\usetikzlibrary{arrows.meta,arrows}

\begin{document}

\maketitle

\begin{abstract}
Dynamic networks are graphs in which edges are available only at specific time instants, modeling connections that change over time. The dynamic network creation game studies this setting as a strategic interaction where each vertex represents a player. Players can add or remove time-labeled edges in order to minimize their personal cost. This cost has two components: a construction cost, calculated as the number of time instants during which a player maintains edges multiplied by a constant $\alpha$, and a communication cost, defined as the average distance to all other vertices in the network. Communication occurs through temporal paths, which are sequences of adjacent edges with strictly increasing time labels and no repeated vertices. We show for the shortest distance (minimizing the number of edges) that the price of anarchy can be proportional to the number of vertices, contrasting the constant price conjectured for static networks.
\end{abstract}

\section{Introduction}
\label{sec:introduction}
The \emph{Network creation game} (NCG) was first introduced by Fabrikant et al.~\cite{FLMPS03} 
and has been extensively investigated 
as a game-theoretic model to understand networks formed by selfish agents. 
An NCG is defined by the strategy and cost at agents in the typical 
non-cooperative strategic game fashion. 
The strategy space of an agent is the subset of agents, 
that means the agent constructs a communication edge at a cost of $\alpha$ 
to each agent in its strategy. 
A strategy profile (i. e., the tuple of strategies) of all agents defines a communication network, 
where each communication link can be used by other agents in both directions. 
The cost of an agent is the sum of the construction cost, 
which is the size of its strategy multiplied by $\alpha$, and 
the communication cost, 
which is the sum of distance to all the other agents on the 
communication network formed by the current strategy profile. 
Each agent can change its strategy so that its cost is minimized, 
but selfish behavior of agents may result in an \emph{Nash equilibrium}, 
where no agent can strictly reduce it cost by changing its own strategy. 
The social cost of an NCG is the sum of the cost of all agents, 
and the quality of a communication network formed by NCG is 
evaluated by the \emph{Price of Anarchy (PoA)}, 
which is the ratio between the maximum social cost of a Nash equilibrium 
and optimal social cost. 
Fabrikant et al. pointed out that if every Nash equilibrium is a tree, 
then the PoA of an NCG is smaller than five~\cite{FLMPS03}. 
Then, the topology of a Nash equilibrium has been extensively investigated 
to prove this ``Tree Conjecture,'' 
while there are non-tree Nash equilibria for $n < \alpha$. 
Recently, Dippel and Vetta showed that the conjecture is true for $\alpha > 3n-3$, 
thus the PoA is constant~\cite{DV22}. 
However, Fabrikant et al. pointed out that computing a best response of NCG is NP-hard 
in their seminal paper~\cite{FLMPS03}, and 
Kawald and Lenzner showed that an NCG may admit a best response cycle~\cite{KL13}. 
That is, selfish players cannot always form an efficient communication 
network by rational strategy changes. 
Alon et al. introduced a restricted version of the NCG, called the 
\emph{basic network creation game} (BNCG), 
where an agent can only remove one incident edge and add another incident edge 
at one time~\cite{ADML10}. 
The BNCG removes the construction cost since any strategy change 
cannot change the size of a strategy profile, 
and it focuses on the structure of a communication network. 
Additionally, a best response of a swap game can be computed in a polynomial time. 
The authors showed that the diameter of any tree Nash equilibrium of the BNCG is two (i.e., 
a star), 
while that of a general graph is $2^{O(\sqrt{\lg n})}$. 
Lenzner showed that BNCG starting in a tree communication graph 
is a \emph{potential game}, i.e., 
there exists a potential function on the strategy profile 
that monotonically decreases by selfish strategy changes. 
They showed that BNCG converges to an equilibrium in $O(n^3)$ strategy changes, 
while it admits a best response cycle on a general graph. 

Inspired by large computer networks, where 
users and computers are subject to hardware limitations, locality, and asynchrony, 
NCGs under different constraints have been introduced. 
Bil\`{o} et al. introduced an NCG with local information, 
where each player knows strategies of other players in its $k$-neighborhood 
and considers the worst case for the other players in the outside of its visibility~\cite{BGLP14}.
Cord-Landwehr and Lenzner introduced an NCG where 
the strategy space of a player is other players within distance $k$ 
but the communication cost of a player depends on the entire communication network~\cite{CL15}. 
Bil\`{o} et al. introduced the NCG on a host graph, 
where agents cannot construct a communication edge not included in the host graph~\cite{BGLP15}. 
In recent years, temporal graphs have attracted attention 
as a fundamental mathematical framework for 
modeling the dynamic behavior of computer networks.
A temporal graph is an edge-labeled graph, 
where a label of an edge shows when the edge is available. 
Bil\`{o} et al. introduced the \emph{temporal reachability network creation game}, 
where the players buy the edges of a host temporal graph~\cite{BCFGKLS23}. 
Specifically, the authors consider a single-labeled temporal graph, 
where each edge is available only at a particular time instance. 
The strategy space of a player is a subset of players, 
which corresponds to a set of temporal edges, i.e., 
a tuple of an edge and time instance. 
The construction cost of a player is the size of its strategy, 
and the communication cost is the number of unreachable players multiplied by a constant $K$, 
where two players are reachable when there exists a 
monotonically increasing temporal path. 
They showed that PoA is in $\Omega(\log n)$ for $n$ players 
by showing that a $\log n$-dimensional hypercube is a Nash equilibrium  
while a spanning tree is a social optimum. 
This result matches the optimum and minimal network design for temporal graphs  
by Akrida et al.~\cite{Akrida2017}. 
The authors showed that given any graph $G$, 
the minimum number of time labels required for 
reachability between any pair of vertices is achieved 
by assigning time labels along a spanning tree of $G$ and 
it is $2n-3$, 
while there exists a minimal labeling along a hypercube 
that requires $\Theta(n \log n)$ labels and removal of any 
time label loses connectivity of some pair of vertices. 
Bil\`{o} et al. further extended the game with 
the terminal model, where the players consider reachability to 
a common subset of terminal players, 
and the \emph{global edge buying model}, where players can construct 
edges between any players~\cite{BCFGKL25}. 
They showed that PoA for $k$ terminals is $\Omega(\sqrt{k})$ in the 
global buying model, 
while it is $\Omega(\log k)$ for the original local buying model. 
Thus, global strategy does not improve the PoA. 
These two papers both point out that it is important to consider 
players who want to minimize the arrival time~\cite{BCFGKLS23} 
or the total distance~\cite{BCFGKL25} on a temporal graph. 

In this paper, we adopt the shortest distance instead of reachability 
for the communication cost of an NCG on temporal graphs. 
A temporal path with strictly increasing time labels is a sequence of 
temporal edges (i.e., labeled edges), where 
the label of edges strictly increases. 
We define the distance between two players on a temporal communication graph 
as the minimum length of a temporal path with strictly increasing time labels 
between them. 
Then, we adopt the original cost function for a player,  
i.e., the sum of the size of its strategy multiplied by $\alpha$ and 
the sum of the shortest distance to all the other players. 
Contrasting the static network creation game, we show for any $\alpha\geq2$ a
very high PoA, proportional to $n$.

\section{Dynamic network creation game}
\label{sec:definition}
In this paper, we use standard graph theory vocabulary for undirected simple graphs with possible loops~\cite{BM08,D25,KT06}.
We also use the abusive notation $uv=vu=\{u,v\}$ when it is clear from the context that $u$ and $v$ are vertices.

The dynamic network creation game we consider has a non negative \emph{atomic cost} $\alpha>0$, it is played by a set of $n$ \emph{players} called $V$.
The \emph{strategy} of player $v\in V$ is a finite set $\bm s(v)\in 2^{V\times\mathbb N}$.
A \emph{strategy profile} for players $V$ is a sequence of strategies $\bm s=(\bm s(v))_{v\in V}$.
The \emph{social cost} of playing $\bm s$ is $c[\bm s]=\sum_{v\in V}c[\bm s](v)$, where $c[\bm s](v)$ denotes the \emph{individual cost} of player $v$.
This individual cost has two components, the first corresponding to the cost for maintaining connection, while the second is related to either the out-reachability of the player towards the rest of the network, or the in-reachability to the player from the rest of the network.
We will see later that the definition of social cost will not change, whichever definition of individual cost in use.
However, there will be situations which are Nash equilibria for one definition but not necessarily Nash equilibria for the other definition (Figure~\ref{fig:assymmetry}).
Formally, $c[\bm s](v)=c^+[\bm s](v)$ or $c[\bm s](v)=c^-[\bm s](v)$ where
\[
c^+[\bm s](v)=\alpha\times|\bm s(v)|+\sum_{w\in V}d_{\mathcal G[\bm s]}(v,w)
\quad\textrm{ and }\quad
c^-[\bm s](v)=\alpha\times|\bm s(v)|+\sum_{u\in V}d_{\mathcal G[\bm s]}(u,v)
\]
where $\mathcal G[\bm s]$ is called the communication temporal graph of $\bm s$,
where $d_{\mathcal G[\bm s]}(v,w)\geq0$ is called the temporal distance from $v$ to $w$ in $\mathcal G[\bm s]$,
and where both notions are to be formally defined in the sequel.
We will also see that $d_{\mathcal G[\bm s]}$ is not necessarily symmetric (Figure~\ref{fig:assymmetry}).

To simplify notations, in what follows we stop mentioning $\bm s$ with notation $[\bm s]$.
Precisely, we let $\mathcal G=\mathcal G[\bm s]$, $E=E[\bm s]$ and $\lambda=\lambda[\bm s]$.
The \emph{communication temporal graph $\mathcal G$ of strategy profile $\bm s$} is a tuple $\mathcal G=(V,E,\lambda)$ where
\[
E=\{uv:\exists t\in\mathbb N\textrm{ s.t. } (u,t)\in\bm s(v)\lor(v,t)\in\bm s(u)\}
\]
and $\lambda(uv)=\{t:(u,t)\in\bm s(v)\lor(v,t)\in\bm s(u)\}$ for every $uv\in E$.
That is, $G=(V,E)$ is a graph and $\lambda: E\to 2^\mathbb N$.
For every edge $e\in E$, the elements of set $\lambda(e)$ are called the \emph{time labels} corresponding to instants when edge $e$ is considered active.
If $|\lambda(uv)|=1$ for every $uv\in E$, we say that temporal graph $\mathcal G$ has \emph{simple labelling}.

A \emph{temporal path} from $v$ to $w$ in $\mathcal G$ is a path from $v$ to $w$ in $G$ together with strictly increasing time labels on the edges appearing in the path, that is, a sequence $(e_1,t_1),(e_2,t_2),\dots,(e_k,t_k)$, where $t_i\in\lambda(e_i)$ for $1\leq i\leq k$, where the edges $e_1,e_2,\dots,e_k$ form a path from $v$ to $w$ in $G$, and where $t_1<t_2<\dots<t_k$.
Because of the time labels, a temporal path might exist from $v$ to $w$ when there is no temporal path from $w$ to $v$.

Three variants of the dynamic network creation game can be defined, corresponding to the definition of \emph{temporal distance $d_{\mathcal G}(v,w)$ from $v$ to $w$ in $\mathcal G$} as one of the three following alternatives.
In any case it is infinite when there is no temporal path from $v$ to $w$ in $\mathcal G$, otherwise it is:
\begin{itemize}
\item \emph{shortest distance}: the least number of edges in a temporal path from $v$ to $w$;
\item \emph{foremost distance}: the smallest time label $t_k$ for which there exists a temporal path $(e_1,t_1),(e_2,t_2),\dots,(e_k,t_k)$ from $v$ to $w$;
\item \emph{fastest distance}: the smallest value $t_k-t_1$ for which there exists a temporal path $(e_1,t_1),(e_2,t_2),\dots,(e_k,t_k)$ from $v$ to $w$.
\end{itemize}

Strategy profile $\bm s$ has \emph{optimal social cost} if its social cost is minimum among all possible strategy profiles with the same atomic cost $\alpha$ and the same number $n$ of players.
Strategy profile $\bm s$ is a \emph{Nash equilibrium} if for every player $v\in V$, there is no other strategy profile $\bm s'$ for $V$ with the same atomic cost and $\bm s'(u)=\bm s(u)$, for every $u\neq v$, such that the individual cost of $v$ when playing $\bm s'$ is strictly less than the individual cost of $v$ when playing $\bm s$.
Figure~\ref{fig:assymmetry} exemplifies strategy profiles and Nash equilibria.
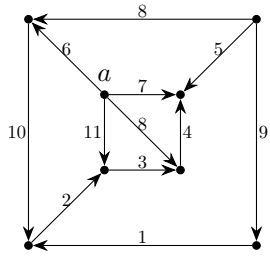
\begin{figure}
\centering
\begin{tikzpicture}[
    transform shape,
    vertex/.style={circle, draw, fill=black, inner sep=1pt},
    edge label/.style={midway, inner sep=1pt, scale=0.7}
]

\node[vertex] (OTL) at (0,3) {};
\node[vertex] (OTR) at (3,3) {};
\node[vertex] (OBL) at (0,0) {};
\node[vertex] (OBR) at (3,0) {};

\node[vertex, label=above:{$a$}] (ITL) at (1,2) {};
\node[vertex] (ITR) at (2,2) {};
\node[vertex] (IBL) at (1,1) {};
\node[vertex] (IBR) at (2,1) {};

\draw[{Stealth[length=2mm, width=1.5mm]}-] (OTL) -- node[edge label, above] {8} (OTR);
\draw[-{Stealth[length=2mm, width=1.5mm]}] (OTL) -- node[edge label, left] {10} (OBL);
\draw[-{Stealth[length=2mm, width=1.5mm]}] (OTR) -- node[edge label, right] {9} (OBR);
\draw[{Stealth[length=2mm, width=1.5mm]}-] (OBL) -- node[edge label, above] {1} (OBR);

\draw[-{Stealth[length=2mm, width=1.5mm]}] (ITL) -- node[edge label, above] {7} (ITR);
\draw[-{Stealth[length=2mm, width=1.5mm]}] (ITL) -- node[edge label, left] {11} (IBL);
\draw[{Stealth[length=2mm, width=1.5mm]}-] (ITR) -- node[edge label, right] {4} (IBR);
\draw[-{Stealth[length=2mm, width=1.5mm]}] (IBL) -- node[edge label, above] {3} (IBR);
\draw[-{Stealth[length=2mm, width=1.5mm]}] (ITL) -- node[edge label, above] {8} (IBR);

\draw[{Stealth[length=2mm, width=1.5mm]}-] (OTL) -- node[edge label, above] {6} (ITL);
\draw[-{Stealth[length=2mm, width=1.5mm]}] (OTR) -- node[edge label, above] {5} (ITR);
\draw[-{Stealth[length=2mm, width=1.5mm]}] (OBL) -- node[edge label, above] {2} (IBL);

\end{tikzpicture}
\caption{A strategy profile with $n=8$.
Every edge has an arrow originating from the vertex which has bought the edge.
The communication temporal graph contains all the edges, without the arrows which have no meaning for the graph.
For high atomic cost $\alpha\geq100$ this strategy profile is a Nash equilibrium when the individual cost is defined using the out-reachability.
In particular, vertex $a$ need to buy the edge labeled with $8$ in order to reach its neighbour at the other extremity of that edge.
However, this strategy profile is not a Nash equilibrium for any atomic cost $\alpha$ when the individual cost is defined using the in-reachability: vertex $a$ would then rather not buy the two edges labeled $8$ and $11$.
In particular, its neighbour at the other extremity of the edge labeled with $8$ can take the edge labeled with $4$ then the edge labeled with $7$ in order to reach vertex $a$.}
\label{fig:assymmetry}
\end{figure}

Given $\alpha>0$ and a number $n$ of players, the \emph{price of anarchy} $PoA(\alpha,n)$ is the ratio of the worst (maximum) social cost of a Nash equilibrium over the optimal social cost for strategy profiles with atomic cost $\alpha$ and $n$ players.
The \emph{price of stability} $PoS(\alpha,n)$ is defined similarly with the ratio of the best (minimum) social cost of a Nash equilibrium over the optimal cost.
These are well-defined (finite) ratios because of the following observations.

The social cost of a strategy profile where every vertex buys an edge to every other vertex (with any time label) is finite.
Hence, the optimal social cost for any $\alpha>0$ and any $n\in\mathbb N$ is finite.
Suppose a Nash equilibrium exists with infinite cost, then, there exist vertices $u$ and $v$ for which the temporal distance from $u$ to $v$ is infinite.
But then adding $v$, together with any time label, to the strategy of $u$ would strictly decrease the individual cost for $u$ in the new strategy profile.
Contradiction.
Hence, the social cost of any Nash equilibrium for $\alpha$ and $n$ is finite.
Whence, both $1\leq PoS(\alpha,n)\leq PoA(\alpha,n)$ are finite numbers.

Let $\bm s$ be a strategy profile with either an optimal social cost or the fact that $\bm s$ is a Nash equilibrium.
We have just seen that the social cost of $\bm s$ is finite.
We would like now to give an equivalent definition of this social cost. 
Let $\mathcal G=\mathcal G[\bm s]=(V,E,\lambda)$ be the communication temporal graph of $\bm s$.
By a similar argument as above, any edge with any time label will not be bought twice, we then also confound the social cost $c[\bm s]$ of strategy $\bm s$ with what we call $c(\mathcal G)$, the \emph{social cost of $\mathcal G$}.
It is defined as follows, where the value does not change either we consider out-reachability or in-reachability for the individual cost:
\[
c(\mathcal G)=c[\bm s]=\sum_{v\in V}c^+[\bm s](v)=\sum_{v\in V}c^-[\bm s](v)=\alpha\times\sum_{e\in E}|\lambda(e)|+\sum_{u,v\in E}d_{\mathcal G}(u,v).
\]
Since $c(\mathcal G)$ is finite, the temporal distance from any vertex $u$ to any vertex $v$ must be finite too.
In this case, temporal graph $\mathcal G$ is called \emph{temporally connected}.
Its \emph{diameter} is defined as the maximum temporal distance from a vertex to another vertex.

\section{Low price of stability and high price of anarchy}
\label{sec:shortest}
In this paper, we study games with the temporal shortest distance function.
In the classical game of network creation on static graphs, trees (and stars) play a crucial role since they are connected graphs with the least number of edges.
This is not the case for temporal graphs, where computing the optimal size of a temporal spanner is an open question~\cite{Casteigts_et_al_about_spanners}.
For classical games, trees also inspired the \emph{Tree Conjecture}~\cite{FLMPS03}, whose consequence would be a constantly bounded price of anarchy.
For dynamic games, we show a loosely unbounded price: given any $\alpha$, there are at least some values of $n$ where $PoA(\alpha,n)\geq\frac{n}{13}$.
While not implying a lower bound in $\Omega(n)$ for the price of anarchy, this hints at a very bad aftermath for the dynamic network creation game when players are uncooperative.
\begin{theorem}
\label{theorem:shortest}
In a dynamic network creation game with atomic cost $\alpha$ and $n$ players using the temporal shortest distance for their communication cost:
\begin{itemize}
    \item if $0<\alpha<1$, then $1=PoS(\alpha,n)=PoA(\alpha,n)$.
    \item if $1\leq\alpha<2$, then $1\leq PoS(\alpha,n)\leq PoA(\alpha,n)\leq\frac{4}{3}$. The upper bound $\frac{4}{3}$ is not tight and only achievable asymptotically for $\alpha=1$ and $n\to\infty$.
    \item if $\alpha\geq2$, then $PoS(\alpha,n)=1$. There exists $n$ such that $PoA(\alpha,n)\geq\frac{n}{13}$.
\end{itemize}
\end{theorem}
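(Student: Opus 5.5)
The plan rests on one counting identity. In any temporally connected $\mathcal G$ with $m=|E|$ edges and $L=\sum_e|\lambda(e)|\geq m$ labels, the $2m$ ordered adjacent pairs are at distance $1$ and every other ordered pair is at distance at least $2$. Hence
\[
c(\mathcal G)\;\geq\;\alpha L+2m+2\bigl(n(n-1)-2m\bigr)\;\geq\;2n(n-1)+(\alpha-2)m ,
\]
with equality exactly when $L=m$ and the temporal diameter is at most $2$. The sign of $\alpha-2$ drives the case split, and I will also use the deviation argument from the finiteness discussion: buying a missing edge $uw$ saves $d(u,w)-1$ for $u$ at price $\alpha$.

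\emph{Case $0<\alpha<1$.} The bound is minimised by $m=\binom n2$, so the complete graph with simple labelling is the optimum, at cost $\alpha\binom n2+n(n-1)$. In a Nash equilibrium every pair must be adjacent, since otherwise $u$ saves at least $1>\alpha$ by buying $uw$. No edge can carry a second label, because dropping it does not change any distance once all pairs are adjacent. So every equilibrium equals the optimum, and $PoS=PoA=1$.

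\emph{Case $1\leq\alpha<2$.} The complete graph is again optimal, and it is an equilibrium because dropping an edge costs its owner at least $1\geq\alpha$ in distance. Hence $PoS=1$. In any equilibrium, a non-adjacent pair at distance at least $3$ would let $u$ save at least $2>\alpha$, so all distances are at most $2$. Redundant labels should be removable, and I would argue $L=m$ by checking that a label whose deletion keeps all distances unchanged is dropped by its owner. Then an equilibrium costs at most $2n(n-1)+(\alpha-2)m\leq 2n(n-1)$, while the optimum costs $(\tfrac\alpha2+1)n(n-1)\geq\tfrac32 n(n-1)$. This gives $PoA\leq\tfrac43$. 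Tightness would require $\alpha=1$ and $m=o(n^2)$, and the lower-order terms (e.g. a sparse diameter-$2$ temporal graph still needs $\Omega(n)$ edges and some extra labels) make the bound approached only asymptotically.

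\emph{Case $\alpha\geq 2$, $PoS=1$.} Here the bound favours few edges, so the optimum should be a sparse temporally connected graph of diameter $2$ with almost $2n-4$ labels. The candidate is a star whose centre edges carry labels arranged so that every leaf-to-leaf pair has an increasing two-hop path; one such arrangement gives every edge the labels $\{1,2\}$, and a more economical one uses two special edges. I would pin down the exact optimum by combining the displayed bound with the known lower bound of roughly $2n-4$ labels for temporal connectivity (Akrida et al.). I would then verify that this structure is an equilibrium. Leaves owning their edges cannot drop a label without losing reachability. Any extra edge saves at most $1$ per target, and only for the single target it reaches, so the saving is $1$ at price $\alpha\geq2$.

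\emph{Case $\alpha\geq 2$, $PoA\geq n/13$.} This is the main obstacle. I need an equilibrium whose total distance is $\Theta(n^3)$ while the optimum is $O(\alpha n+n^2)$; choosing $n$ large relative to $\alpha$ (say $n\gg\alpha$) then gives a ratio of about $n/13$. My first attempt is a cycle or long path (plus possibly a small gadget) with simple labels increasing along both directions in an interleaved pattern, so that pairs at graph distance $k$ are at temporal distance about $k$, giving sum $\Theta(n^3)$. The crux is to prove no vertex $v$ gains by buying any set of new labelled edges. The intuition is that a shortcut edge $(w,t)$ can only be combined with labels larger than $t$ after it, and, for $v$, only with labels smaller than $t$ before it. So with a carefully "time-monotone" labelling each purchased edge shortens distances to only a constant number of targets. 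Its saving would then be at most about $\alpha$, and dropping an owned edge would disconnect someone. I expect that making this per-edge saving bound rigorous, including combinations of several purchased edges, is where the constant $13$ comes from and where most of the work lies.
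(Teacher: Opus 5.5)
The counting bound and the first two items match the paper's Lemma~\ref{lemma:basic}. Your bound $PoA\le 4/(\alpha+2)$ is even a little cleaner, though attaining $4/3$ asymptotically at $\alpha=1$ still needs a sparse equilibrium, such as the paper's $K_{2,n-2}$ profile. You also still have to justify $L=m$. The reason is that a temporal path from $u$ can use an edge at $u$ only as its first edge. So for $c^+(u)$ only the smallest label of each edge at $u$ matters (the largest, for $c^-$), and the owner of any other label can drop it at no cost. Separately, your claim that $K_n$ is an equilibrium for $1\le\alpha<2$ is unjustified when $\alpha>1$: if an owner $u$ has a two-hop detour to $v$, dropping $uv$ gains $\alpha-1>0$. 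The theorem does not need $PoS=1$ in that range anyway.

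The first genuine gap is $PoS=1$ for $\alpha\ge2$, where the star cannot work. By the observation above, every equilibrium has simple labelling. A simply labelled star with centre $c$ and two leaves $a,b$ is not temporally connected: if $\lambda(ac)\le\lambda(bc)$, then $b$ cannot reach $a$. Concretely, in your $\{1,2\}$-star each owner just drops the label that is useless for its own paths. A star cannot be optimal either. It needs at least $2n-3$ labels on $n-1$ edges, so its cost exceeds $c^*=2n(n-1)+(\alpha-2)(2n-4)$ by at least $\alpha+2n-6>0$. For $\alpha>2$, matching $c^*$ requires simple labelling, exactly $2n-4$ edges and temporal diameter $2$ at the same time, which forces a non-tree. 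The paper uses $K_{2,n-2}$ with these ingredients:
\begin{itemize}
\item hub labels $1,\dots,n-2$;
\item the two labels at each degree-$2$ vertex summing to $n-1$;
\item an explicit ownership pattern.
\end{itemize}

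The second gap is fatal to your plan for $n/13$: the regime $n\gg\alpha$ cannot work. Suppose $d(v,w)=D$ in an equilibrium. Buying the edge $vw$ (by $v$ under $c^+$, by $w$ under $c^-$) increases no distance and brings $d(v,w)$ down to $1$, so $D\le\alpha+1$. Hence every equilibrium has temporal diameter at most $\alpha+1$ and communication cost at most $(\alpha+1)n(n-1)$. Since any player may also deviate to buying an edge to everybody, $PoA(\alpha,n)\le(\alpha+1)/2$ for $\alpha\ge2$ and every $n$. Your claim that a purchased edge saves at most about $\alpha$ already fails for a single shortcut to a target at distance $\Theta(n)$. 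With a label below all others, that shortcut even hands $v$ all of $w$'s out-paths.

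The paper goes the other way. Its diminished fan $dF_{n-1}$ is shown to be an equilibrium for $\alpha\ge n^2/4$. Be aware, though, that in that regime the term $\alpha(2n-4)$ dominates both $c(dF_{n-1})$ and $c^*$, while the communication cost is only $O(n^3)$. So that ratio stays bounded, by about $3$. When $\alpha\gg n$, a ratio growing with $n$ needs $\Theta(n^2)$ labels. One example is $K_n$ with every edge carrying the same smallest label. It is rigid under $c^+$ for every $\alpha$, because a temporal path of length $\ge2$ would need a smaller label. Its ratio to $c^*$ tends to $n(n-1)/(4n-8)$ as $\alpha\to\infty$.
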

\begin{proof}
The two first items and the mention about $PoS$ in the last item follow from below Lemma~\ref{lemma:basic}.
A weaker version of the last statement, with $PoA(\alpha,n)\geq\frac{\log n}{4}$, follows from below Lemma~\ref{lemma:hypercubes_shortest}, which exploits the structure of hypercubes in the same manner as in Refs.~\cite{BCFGKLS23,KKK02}.
The last statement, about $PoA(\alpha,n)\geq\frac{n}{13}$, follows from below Lemma~\ref{lemma:diminishedfan_shortest}.
\end{proof}

Let $K_n$ be the $n$-vertex complete graph.
Let $K_{2,n-2}$ be the $n$-vertex complete bipartite graph where one party has exactly two vertices.
Let $F_n$ be the $(n+1)$-vertex graph called $n$-fan graph, and defined as the resulting graph when adding one universal vertex to an $n$-vertex path.
Let $dF_n$ be the $(n+1)$-vertex graph called the diminished $n$-fan graph, and defined as the resulting graph when removing the chord from one of the two $4$-cycles containing a degree $2$ vertex in an $n$-fan graph $F_n$.
A drawing of $dF_7$ is shown in the rightmost part of Figure~\ref{fig:8vertices}.
\begin{lemma}
\label{lemma:basic}
The social optimum of a dynamic network creation game with atomic cost $\alpha>0$ and $n$ players is equal to the social cost of a strategy profile whose communication graph is:
\begin{itemize}
    \item $K_n$ and any simple labelling, if $0<\alpha\leq2$;
    \item $K_{2,n-2}$ together with some well chosen simple labelling, if $\alpha\geq2$.
\end{itemize}
The worst Nash equilibrium is realized by a strategy profile whose communication graph is:
\begin{itemize}
    \item $K_n$ and any simple labelling if $0<\alpha<1$, which also implies $1=PoS(\alpha,n)=PoA(\alpha,n)$;
    \item $K_{2,n-2}$ together with some well chosen simple labelling if $1\leq\alpha<2$, moreover, $1\leq PoS(\alpha,n)\leq PoA(\alpha,n)\leq\frac{4}{3}$. The upper bound $\frac{4}{3}$ is not tight and only achievable asymptotically for $\alpha=1$ and $n\to\infty$.
\end{itemize}
The latter case is still a Nash equilibrium for $\alpha\geq2$, which also implies $PoS(\alpha,n)=1$.
\end{lemma}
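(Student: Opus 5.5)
We begin with a universal lower bound on the social cost of any temporally connected $\mathcal G=(V,E,\lambda)$ with $m=\sum_{e\in E}|\lambda(e)|$ time labels. Every ordered pair $(u,v)$ with $uv\in E$ has distance at least $1$, and every other ordered pair has distance at least $2$. Let $m'=|E|\le m$ be the number of edges. The communication part is then at least $2m'+2(n(n-1)-2m')=2n(n-1)-2m'$, so
\[
c(\mathcal G)\ge \alpha m+2n(n-1)-2m'\ge(\alpha-2)m+2n(n-1).
\]
For $\alpha\le 2$ this quantity is minimised by taking $m$ as large as possible. A useful labelled edge can only appear once per pair, and extra labels on the same pair do not lower distances below $1$. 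Hence the best case is $m=m'=\binom n2$, giving $c(K_n)=\alpha\binom n2+n(n-1)$, which $K_n$ with any simple labelling attains. For $\alpha\ge2$ the bound is minimised by the fewest labels, and we need a lower bound on the number of labels in a temporally connected graph in which the remaining pairs sit at distance exactly $2$. We claim that $K_{2,n-2}$ with a well-chosen simple labelling achieves $m=2n-4$ and has diameter $2$ among the leaves. Take hubs $x,y$ and leaves $\ell_i$, and label $x\ell_i$ with $1$ and $y\ell_i$ with $2$. Then leaves reach each other through a hub in both directions only if both orders exist, so we must refine: label $x\ell_i$ with $1$ and $3$? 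That is not simple. Instead, we rely on the known fact from \cite{Akrida2017} that temporal connectivity requires at least $2n-4$ labels. Then we compare costs directly: any graph with $m\ge 2n-4$ labels whose pairs are at distance $\ge 2$ off edges costs at least that of $K_{2,n-2}$, provided the chosen labelling of $K_{2,n-2}$ realises distance $2$ between leaves and between $x,y$, and distance $3$ only where unavoidable. I would construct the labelling explicitly, ordering the leaves and using time labels with $x$-edges increasing and $y$-edges decreasing, and then compute its cost exactly. I expect this to be the main obstacle: matching a lower bound for sparse graphs (fewer than $2n-4$ labels is impossible; exactly $2n-4$ forces essentially the $C_4$-gossip structure) with the exact cost of the construction, including the few pairs at distance $3$.

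For equilibria, the first step is the case $\alpha<1$. Any missing edge $uv$ has $d(u,v)\ge2$, and buying it costs $\alpha<1$ while saving at least $1$. Hence every Nash equilibrium is complete, it is not worth buying duplicate labels, and the equilibrium is unique up to labelling, so $PoA=PoS=1$.

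For $1\le\alpha$, the next step is to verify that the labelled $K_{2,n-2}$ is a Nash equilibrium. A leaf deleting an edge loses reachability, because the structure is minimal, and adding edges saves at most $1$ per edge, which is at most $\alpha$. A hub dropping edges loses reachability, and adding edges to leaves it already reaches saves at most $1$ each.

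The final step is the worst-case bound. For $1\le\alpha<2$, every equilibrium has diameter $\le 2$ apart from negligible pairs: if $d(u,v)\ge3$ then buying $uv$ saves at least $2>\alpha$. So any equilibrium costs at most $\alpha m+2n(n-1)-2m$, and the worst case maximises this with the fewest labels. Comparing with the optimum $\alpha\binom n2+n(n-1)$ gives
\[
\frac{(\alpha-2)(2n-4)+2n(n-1)}{\alpha\binom n2+n(n-1)}\le\frac{2}{\alpha/2+1}\le\frac43,
\]
with equality approached only for $\alpha=1$ as $n\to\infty$. Since $K_{2,n-2}$ is both optimal and an equilibrium for $\alpha\ge2$, we get $PoS=1$.
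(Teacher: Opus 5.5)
Your outline follows the paper's: the counting bound $c\ge\alpha m+2n(n-1)-2|E|$, $K_n$ versus $K_{2,n-2}$, buying any vertex at distance $\ge2$ (resp.\ $\ge3$), and the same ratio. However, two steps are not carried out.

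For the optimum at $\alpha\ge2$, you never exhibit a labelling of $K_{2,n-2}$ attaining $2n(n-1)+(\alpha-2)(2n-4)$. You even anticipate ``a few pairs at distance $3$''. Any such pair makes the cost strictly exceed the bound, so optimality would not follow. The missing check is short. Label $x\ell_i$ with $i$ and $y\ell_i$ with $n-1-i$. For $i<j$, $\ell_i$ reaches $\ell_j$ through $x$ and $\ell_j$ reaches $\ell_i$ through $y$. Also $x$ reaches $y$ through $\ell_1$, and $y$ reaches $x$ through $\ell_{n-2}$ (labels $1<n-2$). So every non-adjacent ordered pair is at distance exactly $2$. (The bound $m\ge2n-4$ is the gossip theorem the paper takes from Bumby. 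It applies after breaking ties between equal labels, which only adds strict temporal paths.)

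In the $PoA$ step, ``cost $\le\alpha m+2n(n-1)-2m$'' needs $m=|E|$. This holds because a temporal path from $u$ uses an edge at $u$ only as its first hop. Hence whoever owns a non-earliest label on an edge gains nothing from it and drops it in equilibrium. Also, there are no pairs at distance $\ge3$ at all, not merely ``negligible'' ones.

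The equilibrium step is the real gap. You never say who buys which edge, although the answer depends on it. Moreover, ``the structure is minimal, so a deleting leaf loses reachability'' is a non sequitur. Minimality means that removing a label breaks \emph{some} ordered pair, whereas under $c^+$ the deleting player only cares about its own distances. Take $n=8$ and the labelling above. The leaf $\ell_5$ (labels $5$ to $x$, $2$ to $y$) can drop $x\ell_5$ and still reach everybody, e.g.\ $\ell_5\to y\to\ell_4\to x\to\ell_6$ at times $2,3,4,6$, for an extra distance cost of only $4$. Likewise, for $n\ge6$ the hub $x$ reaches $\ell_1$ without $x\ell_1$ via $x\to\ell_2\to y\to\ell_1$ at times $2,n-3,n-2$. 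You also treat only pure additions and pure deletions, not swaps.

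What works is the paper's ownership: each hub buys only its edge labelled $n-2$, whose leaf has its other edge labelled $1$; $\ell_1$ and $\ell_{n-2}$ buy their label-$1$ edges; middle leaves buy both of their edges. Combine this with a counting step you lack. Since the diameter is $2$, a player's communication cost is exactly $2(n-1)-\deg$ now and at least $2(n-1)-\deg'$ after any deviation. So switching from $k$ owned edges to $k'$ changes its cost by at least $(\alpha-1)(k'-k)$, and only $k'<k$ needs checking. A hub or extreme leaf left with no edge loses some vertex entirely. A middle leaf left with one edge has a vertex at distance $\ge3$ (with none it is isolated), so its cost changes by at least $2-\alpha$. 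This settles $1\le\alpha\le2$, which is all the worst-equilibrium and $PoA$ claims need.

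But the bound $2-\alpha$ is attained. A middle leaf replacing both its edges by the single edge $(x,0)$ pays $2n-2$ instead of $2n-4$ in distances, so it gains $\alpha-2$. The $n=8$ deletion above gains $\alpha-4$ without any new label. So this profile is not an equilibrium for large $\alpha$. Your argument, like the paper's bare assertion, cannot establish that $K_{2,n-2}$ is still an equilibrium for all $\alpha\ge2$, and $PoS(\alpha,n)=1$ for large $\alpha$ remains unproved.
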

\begin{figure}
\centering
\begin{minipage}{0.28\textwidth}
\centering
\begin{tikzpicture}[
    scale=0.5,
    transform shape,
    vertex/.style={circle, draw, fill=black, inner sep=1.5pt},
    edge label/.style={midway, inner sep=1pt}
]

\foreach \i/\name in {1/V1, 2/V2, 3/V3, 4/V4, 5/V5, 6/V6, 7/V7, 8/V8} {
    \node[vertex] (\name) at ({360/8*(\i-1)}:3) {};
}

\foreach \a/\b/\c in {
    V1/V2/1,  V1/V3/1,  V1/V4/1,  V1/V5/1,  V1/V6/1,  V1/V7/1,  V1/V8/1,
    V2/V3/1,  V2/V4/1,  V2/V5/1,  V2/V6/1,  V2/V7/1,  V2/V8/1,
    V3/V4/1,  V3/V5/1,  V3/V6/1,  V3/V7/1,  V3/V8/1,
    V4/V5/1,  V4/V6/1,  V4/V7/1,  V4/V8/1,
    V5/V6/1,  V5/V7/1,  V5/V8/1,
    V6/V7/1,  V6/V8/1,
    V7/V8/1
} {
    \draw (\a) -- node[edge label, above] {\c} (\b);
}

\end{tikzpicture}
\end{minipage}
\hfill
\begin{minipage}{0.18\textwidth}
\centering
\begin{tikzpicture}[
    scale=0.7,
    transform shape,
    vertex/.style={circle, draw, fill=black, inner sep=1.5pt},
    edge label/.style={midway, inner sep=1pt}
]

\node[vertex] (L) at (0,0.5) {};

\node[vertex] (M1) at (2, 3) {};
\node[vertex] (M2) at (2, 2) {};
\node[vertex] (M3) at (2, 1) {};
\node[vertex] (M4) at (2, 0) {};
\node[vertex] (M5) at (2,-1) {};
\node[vertex] (M6) at (2,-2) {};

\node[vertex] (R) at (4,0.5) {};

\draw[{Stealth[length=2mm, width=1.5mm]}-] (L) -- node[edge label, above] {1} (M1);
\draw[{Stealth[length=2mm, width=1.5mm]}-] (L) -- node[edge label, above] {2} (M2);
\draw[{Stealth[length=2mm, width=1.5mm]}-] (L) -- node[edge label, above] {3} (M3);
\draw[{Stealth[length=2mm, width=1.5mm]}-] (L) -- node[edge label, above] {4} (M4);
\draw[{Stealth[length=2mm, width=1.5mm]}-] (L) -- node[edge label, above] {5} (M5);
\draw[-{Stealth[length=2mm, width=1.5mm]}] (L) -- node[edge label, above] {6} (M6);

\draw[{Stealth[length=2mm, width=1.5mm]}-] (M1) -- node[edge label, above] {6} (R);
\draw[-{Stealth[length=2mm, width=1.5mm]}] (M2) -- node[edge label, above] {5} (R);
\draw[-{Stealth[length=2mm, width=1.5mm]}] (M3) -- node[edge label, above] {4} (R);
\draw[-{Stealth[length=2mm, width=1.5mm]}] (M4) -- node[edge label, above] {3} (R);
\draw[-{Stealth[length=2mm, width=1.5mm]}] (M5) -- node[edge label, above] {2} (R);
\draw[-{Stealth[length=2mm, width=1.5mm]}] (M6) -- node[edge label, above] {1} (R);

\end{tikzpicture}
\end{minipage}
\hfill
\begin{minipage}{0.28\textwidth}
\centering
\begin{tikzpicture}[
    scale=0.6,
    transform shape,
    vertex/.style={circle, draw, fill=black, inner sep=1.5pt},
    edge label/.style={midway, inner sep=1pt}
]

\node[vertex] (V000) at (0,0,0) {};
\node[vertex] (V001) at (0,0,3) {};
\node[vertex] (V010) at (0,3,0) {};
\node[vertex] (V011) at (0,3,3) {};
\node[vertex] (V100) at (3,0,0) {};
\node[vertex] (V101) at (3,0,3) {};
\node[vertex] (V110) at (3,3,0) {};
\node[vertex] (V111) at (3,3,3) {};

\draw (V000) -- node[edge label, above] {2}  (V100);
\draw (V000) -- node[edge label, left]  {3}  (V010);
\draw (V100) -- node[edge label, right] {3}  (V110);
\draw (V010) -- node[edge label, above] {2}  (V110);

\draw (V001) -- node[edge label, above] {2}  (V101);
\draw (V001) -- node[edge label, left]  {3}  (V011);
\draw (V101) -- node[edge label, right] {3}  (V111);
\draw (V011) -- node[edge label, above] {2}  (V111);

\draw (V000) -- node[edge label, left]  {1}  (V001);
\draw (V010) -- node[edge label, left]  {1}  (V011);
\draw (V100) -- node[edge label, right] {1}  (V101);
\draw (V110) -- node[edge label, right] {1}  (V111);

\end{tikzpicture}
\end{minipage}
\hfill
\begin{minipage}{0.18\textwidth}
\centering
\begin{tikzpicture}[
    scale=0.7,
    transform shape,
    vertex/.style={circle, draw, fill=black, inner sep=1.5pt},
    edge label/.style={midway, inner sep=1pt}
]

\node[vertex] (M1) at (2, 3) {};
\node[vertex] (M2) at (2, 2) {};
\node[vertex] (M3) at (2, 1) {};
\node[vertex] (M4) at (2, 0) {};
\node[vertex] (M5) at (2,-1) {};
\node[vertex] (M6) at (2,-2) {};
\node[vertex] (M7) at (2,-3) {};

\node[vertex] (R) at (4,0.5) {};

\draw[-{Stealth[length=2mm, width=1.5mm]}] (M1) -- node[edge label, above] {12} (R);
\draw[-{Stealth[length=2mm, width=1.5mm]}] (M3) -- node[edge label, above] {5} (R);
\draw[-{Stealth[length=2mm, width=1.5mm]}] (M4) -- node[edge label, above] {4} (R);
\draw[-{Stealth[length=2mm, width=1.5mm]}] (M5) -- node[edge label, above] {3} (R);
\draw[-{Stealth[length=2mm, width=1.5mm]}] (M6) -- node[edge label, above] {2} (R);
\draw[-{Stealth[length=2mm, width=1.5mm]}] (M7) -- node[edge label, above] {1} (R);

\draw[-{Stealth[length=2mm, width=1.5mm]}] (M1) -- node[edge label, left] {6} (M2);
\draw[-{Stealth[length=2mm, width=1.5mm]}] (M2) -- node[edge label, left] {7} (M3);
\draw[-{Stealth[length=2mm, width=1.5mm]}] (M3) -- node[edge label, left] {8} (M4);
\draw[-{Stealth[length=2mm, width=1.5mm]}] (M4) -- node[edge label, left] {9} (M5);
\draw[-{Stealth[length=2mm, width=1.5mm]}] (M5) -- node[edge label, left] {10} (M6);
\draw[-{Stealth[length=2mm, width=1.5mm]}] (M6) -- node[edge label, left] {11} (M7);

\end{tikzpicture}
\end{minipage}
\caption{Four temporally connected temporal graphs over $n=8$ vertices, whose union graphs are a clique $K_8$ (leftmost), a biclique $K_{2,6}$ (middle-left), a hypercube $Q_3$ (middle-right), and a linear structure called $dF_7$ which is a $7$-fan with one missing edge (rightmost). With respect to the shortest distance, they have diameter $1$, $2$, $3$, and $n-2$, respectively. The arrows on $K_{2,6}$ (middle-left) and the diminished $7$-fan (rightmost) represent two strategy profiles, where vertices buy the arrows from which they originate. For an atomic cost where $\alpha\geq1$, the former strategy profile is a Nash equilibrium. For big enough atomic cost $\alpha\geq3$, both the former strategy and the $Q_3$ (middle-right, with any purchasing configuration) are Nash-equilibria. For high atomic cost $\alpha\geq100$ both former and latter strategy profiles are Nash equilibria together with the hypercube $Q_3$.}
\label{fig:8vertices}
\end{figure}
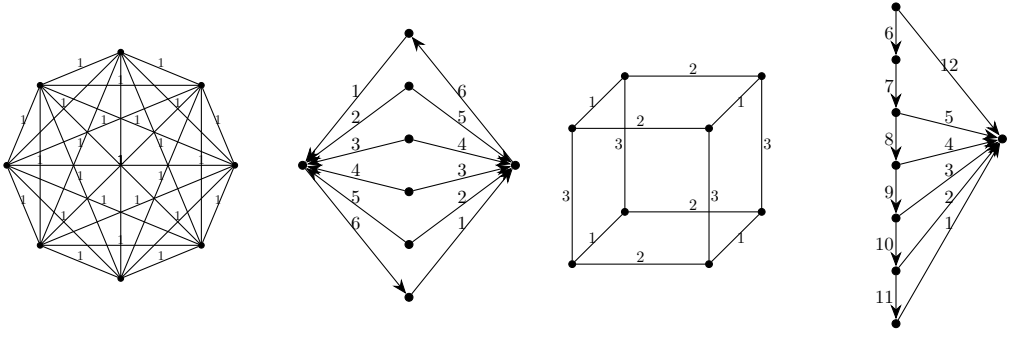
\begin{proof}
The proof is inspired from~\cite{FLMPS03}.
Let $\mathcal G=\mathcal G[s]=(V,E,\lambda)$ be the communication temporal graph of $\bm s$, a strategy profile with either an optimal social cost or the fact that $\bm s$ is a Nash equilibrium.
For every pair of vertices $u$ and $v$, either $uv\in E$ and we also have $d_{\mathcal G}(u,v)=d_{\mathcal G}(v,u)=1$, or we have both $d_{\mathcal G}(u,v)\geq 2$ and $d_{\mathcal G}(v,u)\geq 2$.
Hence,
\[
\begin{aligned}
c(\mathcal G)=c[\bm s]=\alpha\sum_{e\in E}|\lambda(e)|+\sum_{u,v\in E} d_{\mathcal G}(u,v) &\geq \alpha|E| + 2|E|\times1 + (n(n-1)-2|E|)\times 2\\
&\geq 2n(n-1)+(\alpha-2)|E|,
\end{aligned}
\]
where the bound is tight if $\mathcal G$ has simple labelling and diameter at most $2$.

If $\alpha<2$, the social cost is minimized when in addition to the above, $|E|$ is maximized.
This is realized by $K_n$ together with any simple labelling.
If $\alpha\geq2$, the social cost is minimized when $|E|$ is minimized among $\mathcal G$ with simple labelling and diameter at most $2$.
From Ref.~\cite{Bumby1981Telephones}, the minimum number of time labels needed for a temporally connected temporal graph is $2n-4$.
Together with the requirement about the diameter and single labelling, we result in $K_{2,n-2}$ together with the following labelling.
Both vertices of degree $(n-2)$ have their incident edges taking distinct labels from $1$ to $n-2$.
The sum of time labels on both incident edges of any vertex of degree $2$ is $n-1$.
When $n=8$, such a labelling is shown in the middle-left part of Figure~\ref{fig:8vertices}.

As for Nash equilibria, if $\alpha<1$, any non-edge $uv\notin E$ would be bought by either $u$ or $v$, together with any time label, in order to decrease its cost by at least $1$ unit (from the difference between the previous distance from $u$ to $v$ and the new one).
Hence, the diameter of $\mathcal G$ is $1$ for this case.
It must then be $K_n$ together with any simple labelling.
Hence, $PoA(\alpha,n)=1$.

Similarly as before, in case of Nash equilibrium, if $\alpha<2$, whenever the distance from $u$ to $v$ is at least $3$, vertex $u$ would buy the non-edge $uv$, together with any time label, in order to decrease its cost by at least $2$ units (from the difference between the previous distance from $u$ to $v$ and the new one).
Hence, the diameter of $\mathcal G$ is at most $2$ for this case.
The worst social cost is realized by $K_{2,n-2}$ together with the above defined labelling.
Hence,
\[
PoA(\alpha,n)
=\frac{2n(n-1)+(\alpha-2)(2n-4)}{2n(n-1)+(\alpha-2)(\frac{n(n-1)}{2})}
=\frac{4n^2+4(\alpha-3)n-8\alpha+16}{(\alpha+2)n^2-(\alpha+2)n}
\leq\frac{4}{3}.
\]
It remains to prove for $\alpha\geq2$ that $PoS(\alpha,n)=1$.
Consider four vertices $a,b,c,d$ in $K_{2,n-2}$: $a$ and $b$ have degree $n-2$ while $c$ and $d$ are chosen arbitrarily.
Let:
$a$ buy the edge $ac$ with time label $n-2$;
$b$ buy the edge $bd$ with time label $n-2$;
$c$ buy the edge $bc$ with time label $1$;
$d$ buy the edge $ad$ with time label $1$;
any other vertex $e$ buy both edges $ae$ and $be$ at time labels $i$ and $n-1-i$, for $2\leq i\leq n-3$.
This strategy profile is a Nash equilibrium with the same social cost as the optimum one.
An example with $n=8$ is shown by the arrows in the middle-left part of Figure~\ref{fig:8vertices}.
\end{proof}

When $1\leq\alpha<2$, the asymptotic growth is medium: any price of anarchy higher than $1.25$ implies $n\geq15$, and higher than $1.3$ implies $n\geq39$.
For the case where $\alpha\geq2$, we study whether the communication temporal graphs of Nash equilibria have small ratio when compared to the optimal social cost.
We show in the middle-right part of Figure~\ref{fig:8vertices} a hypercube whose time labels are inspired by Refs.~\cite{BCFGKLS23,KKK02}.
It is a Nash equilibrium for $\alpha\geq3$.

We show in Lemma~\ref{lemma:hypercubes_shortest} below that the ratio of social cost between the latter hypercubes and the optimal social cost is unbounded.
However, these hypercubes are communication temporal graphs of Nash equilibria only for big enough $\alpha$, precisely when $\alpha>\log n$.
The following property of hypercubes is crucial for proving the subsequent Lemma~\ref{lemma:hypercubes_shortest}.
\begin{lemma}
\label{lemma:hypercubes}
For any dimension $d\in\mathbb N$, there exists a single labelling of the hypercube $Q_d$ such that the temporal shortest distance in the resulting time labelled temporal graph is the same as the geodesic distance in static graph $Q_d$.

Moreover, for any pair of distinct source and target vertices, there is one and only one temporal path from the source vertex to the target vertex.
This includes the case when the source and target vertices are neighbours: the only temporal path is the direct edge from the source vertex to the target vertex, without any other alternative.
\end{lemma}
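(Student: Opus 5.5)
Label each edge of $Q_d$ by the coordinate in which its two endpoints differ. Vertices are $\{0,1\}^d$, and the edge between $x$ and $x\oplus e_i$ gets the single label $i\in\{1,\dots,d\}$. This is the labelling of Figure~\ref{fig:8vertices} (middle-right), up to renaming of dimensions. The whole proof rests on one observation: along any temporal path the labels strictly increase, so the path uses each dimension at most once.

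For existence, fix a source $x$ and a target $y\neq x$, and let $D=\{i: x_i\neq y_i\}$ with $|D|=k$. Flip the coordinates in $D$ in increasing order $i_1<\dots<i_k$. This gives a path in $Q_d$ whose labels $i_1<\dots<i_k$ strictly increase, and whose vertices are pairwise distinct because they have distinct Hamming distances to $x$. It is therefore a temporal path of length $k=d_{Q_d}(x,y)$. Since every temporal path is in particular a path in the static graph $Q_d$, its length is at least the geodesic distance. Hence the temporal shortest distance equals the geodesic distance.

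For uniqueness, take any temporal path from $x$ to $y$ with labels $t_1<\dots<t_m$. Step $j$ flips coordinate $t_j$, and since the labels are distinct, each coordinate is flipped at most once. The net effect of the path is to flip exactly the set $\{t_1,\dots,t_m\}$, and this must equal $D$. So $m=k$, and because the labels are increasing, $(t_1,\dots,t_m)=(i_1,\dots,i_k)$. The starting vertex $x$ and the sequence of flipped coordinates together determine every vertex on the path, so the path is exactly the canonical one constructed above.

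The neighbour case is a special case of this. If $y=x\oplus e_i$, then $D=\{i\}$, and the only temporal path is the direct edge with label $i$. In particular, no detour around a 4-cycle is possible: such a detour would have to flip some other coordinate $j$ twice, which requires the label $j$ to appear twice and contradicts strict increase.

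The only step needing any care is this uniqueness argument, which reduces to the fact that strictly increasing labels force distinct dimensions, so the parity argument applies. No earlier result is needed. The same labelling and argument carry over directly to the lemma's use in Lemma~\ref{lemma:hypercubes_shortest}.
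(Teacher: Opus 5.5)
Your proof is correct, but it takes a different route from the paper. The paper argues by induction on $d$. It splits $Q_d$ into the two copies $A$ and $B$ of $Q_{d-1}$ determined by the bit in position $d$, and notes that every edge between them carries the maximal label $d$. It concludes that a temporal path from $u\in A$ to $v\in B$ must be a temporal path inside $A$ from $u$ to the twin $v'$ of $v$, followed by the single edge $v'v$. The induction hypothesis on $Q_d[A]$ then yields both the distance equality and uniqueness. You instead give a direct, non-inductive argument: strictly increasing labels force each coordinate to be flipped at most once, so the set of labels used is exactly the set $D$ of coordinates where source and target differ, and monotonicity then fixes their order.

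Your approach buys explicitness and uniformity. It names the unique path (flip the coordinates of $D$ in increasing order) and treats all pairs at once. It also never uses the no-repeated-vertex condition, so uniqueness holds even among temporal walks. Finally, it sidesteps a point the paper's induction leaves implicit: when both endpoints lie in the same half, a temporal path cannot leave that half and come back, because nothing can follow a label-$d$ edge. The paper's induction mirrors the recursive structure of $Q_d$ and isolates the local fact that the crossing edge must be the last one. That is the same ``each dimension is used at most once'' phenomenon your parity argument captures globally.
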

\begin{proof}
We use the same labelling as in Refs.~\cite{BCFGKLS23,KKK02}, which is defined as follows.
The $2^d$ vertices of the hypercube $Q_d$ are represented by the binary numbers of the $2^d$ first integers $0,1,\dots,2^d-1$.
There is an edge between two vertices if and only if their binary numbers differ by only one bit.
Moreover, the edge is time labelled with the position of the bit.
For instance, the vertex with binary number $\overline{0110110}^2$ and the one with binary number $\overline{0110010}^2$ are neighbours, their incident edge has time label $3$ when using little endian order.

The lemma can be proven by induction.
Suppose both properties in the statement of the lemma hold for the above (single) labelling of hypercubes of dimension $d-1$.
We partition the vertices of $Q_d$ into two parts: $A$ contains vertices with binary number beginning with $1$ and $B$ contains the rest.
Let $u\neq v$ be two distinct vertices of $Q_d$.
Since both $Q_d[A]$ and $Q_d[B]$ are hypercubes, the cases where $\{u,v\}\subseteq A$ or $\{u,v\}\subseteq B$ satisfy both properties in the statement of the lemma from the induction hypothesis.
We consider w.l.o.g.\ the case when $u\in A$ and $v\in B$, the other case being symmetric.
Let $v'\in A$ be the vertex with binary number beginning with $1$ and ending with the same $d-1$ bits as $v$: $vv'$ is an edge in $Q_d$ labelled with $d$.

Any temporal path from $u$ to $v$ must take an edge incident with a vertex in $A$ and a vertex in $B$: this edge has time label $d$.
However, the edges in $Q_d[B]$ can only have as time labels the integers between $1$ and $d-1$.
Hence, any temporal path from $u$ to $v$ must be a temporal path from $u$ to $v'$ first, followed by the edge from $v'$ to $v$.
Applying the induction hypothesis on $Q_d[A]$ and its vertices $u$ and $v'$ allows to conclude.
\end{proof}
\begin{lemma}
\label{lemma:hypercubes_shortest}
Let $\alpha\geq2$.
Let $d\in\mathbb N$ be an integer and let $n=2^d$.
For dynamic network creation games with atomic cost $\alpha$ and $n$ players,
\begin{itemize}
\item the optimal social cost is $c^*=2^{d+1}(2^d-1)+(\alpha-2)(2^{d+1}-4)$.
\item there exists a labelling of the hypercube $Q_d$ with social cost $c(Q_d)=d\times\alpha\times2^{d-1}+d\times2^{2d-1}$.
\end{itemize}
Moreover, if $\alpha\geq d$ then the latter case is the communication temporal graph of a Nash equilibrium.
\end{lemma}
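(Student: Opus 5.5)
The first item follows directly from Lemma~\ref{lemma:basic}. For $\alpha\geq2$ the optimum is realized by $K_{2,n-2}$ with simple labelling, $2n-4$ edges and diameter $2$, so its cost equals the lower bound $2n(n-1)+(\alpha-2)|E|$. Substituting $n=2^d$ and $|E|=2^{d+1}-4$ gives $c^*=2^{d+1}(2^d-1)+(\alpha-2)(2^{d+1}-4)$. No new work is needed beyond this arithmetic.

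For the second item, I would take the labelling of Lemma~\ref{lemma:hypercubes}, where the edge between two vertices differing in bit $i$ gets label $i$. This is a simple labelling of the $d2^{d-1}$ edges of $Q_d$, so the construction cost is $\alpha d2^{d-1}$. By Lemma~\ref{lemma:hypercubes}, the temporal distance equals the Hamming distance. The sum over ordered pairs is then $2^d\sum_k k\binom dk=2^d\cdot d2^{d-1}=d2^{2d-1}$. Adding the two parts gives $c(Q_d)$.

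For the Nash equilibrium claim, fix an arbitrary purchasing assignment: each edge is bought by one of its endpoints, with its label. Consider a player $v$ and any deviation. It suffices to show that $v$ cannot gain. I would argue as follows.

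\emph{Removing edges.} By the uniqueness part of Lemma~\ref{lemma:hypercubes}, removing an edge $vw$ that $v$ bought destroys the only temporal path from $v$ to $w$, and likewise from $w$ to $v$. Under either $c^+$ or $c^-$, the distance to or from $w$ must then be restored. So any deviation that drops $k$ own edges must also add edges to reconnect the affected vertices. The key accounting step is to compare the number of edges $v$ buys after the deviation, say $m$, with the number before, $b$.

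\emph{Distances under the deviation.} After the deviation, $v$ still reaches every vertex. Fix the $m$ new neighbours $N'$ of $v$ via its own edges. Each remaining vertex is at distance at least $2$, except those that are neighbours via edges bought by others. Edges bought by others are unchanged, and their endpoints stay at distance $1$.

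The saving in communication cost is at most $\sum_w (d(v,w)-1)$ over the $m'$ vertices $w$ that gain distance $1$, plus possible gains elsewhere. Any single added edge reduces distances by at most $d-1$ to its endpoint. More carefully, I would bound the new communication cost from below: it is at least $(\text{number of direct neighbours})+2(\text{others})$, and on the other side use the exact old cost $d2^{d-1}$.

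I expect the main obstacle to be the step showing that adding edges never pays when $\alpha\geq d$. A new edge $vw$ with label $t$ only creates temporal paths starting with $(vw,t)$ and then continuing with labels greater than $t$ in the hypercube. The tempting claim is that the total reduction over all targets is at most $d$. That claim is false as stated, since paths through $w$ can shorten many distances at once. So I would instead use a potential argument in the spirit of~\cite{BCFGKLS23}. After the new first step to $w$ at time $t$, the reachable set from $w$ using labels $>t$ is a subcube, and distances in it shrink by at most the Hamming savings. The plan is then to bound the total saving per added edge by $\alpha$, possibly by summing over the bits.

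If that bound fails in general, I would fall back on the lemma's quantifier and state the result for $\alpha$ large enough relative to $d$. The fallback would bound the total saving of any deviation by $2^d\cdot d$ against the cost $\alpha$ per edge while keeping the reachability requirement. The lemma's condition $\alpha\geq d$ suggests that the intended proof shows each purchased edge saves at most $d$ in total distance.
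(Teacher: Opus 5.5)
Your arguments for the first two items are correct and match the paper's: Lemma~\ref{lemma:basic} with $n=2^d$ for the optimum, and, for the hypercube, the labelling of Lemma~\ref{lemma:hypercubes}, $d2^{d-1}$ edges and the Hamming-distance sum $d2^{2d-1}$.

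The gap is the equilibrium claim. The step you single out as the main obstacle is that no deviation which adds edges pays when $\alpha\geq d$. You never establish it, and it cannot be established, because it is false for $d\geq4$. Take $d=4$, $\alpha=4$, $v=0000$, and let $v$ additionally buy the non-edge to $1111$ with label $1$. The new temporal paths from $v$ are this edge followed by increasing flips of bits $2,3,4$. So the distance to $1111$ drops from $4$ to $1$, and the distances to $1110$, $1101$, $1011$ drop from $3$ to $2$. That is a saving of $6>\alpha$, whatever the purchasing assignment. Under $c^-$, the same edge with label $5$ saves $7$. In general, under $c^-$ the antipodal edge with label $d+1$ saves $\sum_h\binom dh\max(0,2h-d-1)$, which grows like $2^d\sqrt d$ rather than $d$. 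Your suspicion that ``each added edge saves at most $d$'' is false is therefore correct. Note that this is exactly the paper's own reasoning: ``the diameter of $Q_d$ is $d$, so there is no point buying a non-edge when $\alpha\geq d$'' counts only the saving to the new neighbour. There is no hidden idea there that rescues the threshold $\alpha\geq d$.

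The removal side does not close either, even for large $\alpha$. Uniqueness of temporal paths rules out dropping a single bought edge, but not trading several bought edges for fewer new ones. So ``fix an arbitrary purchasing assignment'' is the wrong setup. For $d=3$, suppose $v=000$ bought all three of its edges. It can keep $(100,1)$, drop $(010,2)$ and $(001,3)$, and buy $(001,1)$. Every vertex stays reachable, and only $010$ moves from distance $1$ to $3$ (via $001$, $011$). Under $c^+$ its cost changes by $2-\alpha<0$ for every $\alpha>2$. The paper's ``either endpoint would keep it'' argument has the same blind spot.

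Your fallback ``for $\alpha$ large enough'' would therefore need two things: a threshold at least as large as the antipodal saving above, and a specific purchasing assignment together with an argument excluding such multi-edge swaps. That is a different statement, and your proposal does not prove it.
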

\begin{proof}
The first equality follows from Lemma~\ref{lemma:basic} where $n$ is replaced by $n=2^d$.
Let us consider $Q_d$ as a static graph:
it is folklore that its number of edges is $|E(Q_d)| = 2^{d-1} d$
while the sum of pairwise distances of its vertices is $\sum_{u,v} d_{Q_d}(u,v) = d \cdot 2^{2d-1} \approx \Theta(n^2\log n)$.
From Lemma~\ref{lemma:hypercubes} there exists a labelling of $Q_d$ such that the social cost is the same as the distance in the static case, that is $c(Q_d) = \alpha \cdot 2^{d-1} d + d \cdot 2^{2d-1}.$

It remains to show for $\alpha\geq d$ a strategy profile which is a Nash equilibrium and which has the hypercube $Q_d$ with the labelling given in the proof of Lemma~\ref{lemma:hypercubes} as its resulting communication temporal graph.
Because the diameter of $Q_d$ is $d$, there is no point for any vertex to buy a non-edge for a direct path when $\alpha\geq d$.
Now, consider any edge $uv$ of the latter labelling of $Q_d$.
We use the unicity of temporal paths in the statement of Lemma~\ref{lemma:hypercubes}: other than the direct path using edge $uv$, there is no other temporal path from $u$ to $v$, nor from $v$ to $u$, in this labelling.
Hence, either the edge is bought by $u$ or $v$, the vertex would keep it.
Whence, this labelling results in a Nash equilibrium.
\end{proof}
\begin{lemma}
\label{lemma:diminishedfan_shortest}
Let $\alpha\geq2$ and $n\geq4$ an integer.
For dynamic network creation games with atomic cost $\alpha$ and $n$ players,
\begin{itemize}
\item the optimal social cost is $c^*=2n(n-1)+(\alpha-2)(2n-4)$.
\item there exists a labelling of the diminished fan $dF_{n-1}$ with social cost $c(dF_{n-1})\geq\alpha(2n-4)+\frac{1}{6}(n-2)^3$.
\end{itemize}
Moreover, if $\alpha\geq\frac{1}{4}n^2$ then the latter case is the communication temporal graph of a Nash equilibrium.
\end{lemma}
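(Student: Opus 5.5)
The plan is to compute the optimum from Lemma~\ref{lemma:basic}, then read off an explicit labelling of $dF_{n-1}$ from the rightmost part of Figure~\ref{fig:8vertices}, bound its cost from below, and finally check the Nash condition for large $\alpha$.

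\textbf{Optimum.} The value $c^*=2n(n-1)+(\alpha-2)(2n-4)$ is the $K_{2,n-2}$ optimum of Lemma~\ref{lemma:basic} for $\alpha\geq2$, so nothing new is needed here.

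\textbf{The labelling.} Call the universal vertex $R$ and the path $M_1,\dots,M_{n-1}$. The diminished fan keeps the hub edges $M_iR$ for every $i\neq2$.
\begin{itemize}
\item Give the hub edge $M_iR$, for $3\leq i\leq n-1$, the label $n-i$. These labels are $1,\dots,n-3$.
\item Give the path edge $M_iM_{i+1}$ the label $n-3+i$. These labels are $n-2,\dots,2n-5$.
\item Give $M_1R$ the label $2n-4$.
\end{itemize}
For $n=8$ this is exactly the figure. The labelling is simple and has $(n-2)+(n-2)=2n-4$ edges, so the construction cost is $\alpha(2n-4)$.

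\textbf{Lower bound on distances.} I claim that for $i<j$ the forward distance from $M_i$ to $M_j$ equals $j-i$. I would prove this by checking that no temporal path from $M_i$ can use $R$ profitably.
\begin{itemize}
\item Every hub label $n-k$ ($k\geq3$) is smaller than every path label. So once a walk has entered the path, it can no longer use a hub edge $M_kR$ with $k\geq3$.
\item Entering $R$ from $M_i$ at label $n-i$ only allows leaving towards $M_k$ with $n-k>n-i$, that is $k<i$, which moves backwards.
\item The edge $M_1R$ carries the maximal label $2n-4$, so any path using it must end there.
\end{itemize}
Hence the only forward route is along the path, and summing gives
\[
\sum_{1\leq i<j\leq n-1}(j-i)=\binom{n}{3}\geq\tfrac16(n-2)^3 .
\]
Backward distances are finite. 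Going $M_j\to R\to M_i$ works for $i\geq3$, and $M_1,M_2$ are reached through $R\to M_3\to M_2\to M_1$ or directly, which I would verify. Temporal connectivity then follows.

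\textbf{Nash equilibrium.} I would fix the ownership as in the figure: each $M_i$ buys its hub edge and the path edge to $M_{i+1}$, with the two edges at $M_1$ adjusted as drawn. The verification has two parts.

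First, no single edge can be dropped. For each edge I would show it is the unique temporal path from its owner to the other endpoint, or more generally to some vertex. Dropping it then makes the owner's cost infinite. This mirrors the uniqueness argument of Lemma~\ref{lemma:hypercubes}.

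Second, no deviation adding $k\geq1$ new labelled edges (possibly also dropping some) can gain. Such a deviation costs at least $k\alpha\geq k\,n^2/4$ more than the savings of edges dropped. The gain in communication is at most the player's current total distance minus the trivial lower bound $n-1$. For players with a large $i$ this is $O(n)$, so only the first path vertices matter.

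\textbf{Main obstacle.} The hard step is $M_1$ and its near neighbours, whose forward sums are about $n^2/2$. A single bought edge into the middle of the path, with a suitable label, can cut this roughly in half. The saving is then about $n^2/4$, matching the threshold $\alpha\geq\frac14n^2$. My first attempt would be a direct counting argument. I would show that after adding one edge $M_1x$, the vertices before and after $x$ along the path still need distances summing to at least $\sum_{k}\min(k,|k-x|+1)\geq n^2/4-O(n)$. Any additional bought edge pays $\alpha$ again for a smaller marginal saving. The exact inequalities, including the lower-order terms, are where I expect the bound to be tight and error-prone. If they fail I would adjust the ownership so that $M_1$ owns as few edges as possible.
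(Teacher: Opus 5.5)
\paragraph{What is fine.} Your optimum, your labelling (exactly the paper's, with $R=v_0$ and $M_i=v_i$) and your cost bound are correct. The exact forward distances $d(M_i,M_j)=j-i$ and the count $\binom{n}{3}\geq\frac16(n-2)^3$ are cleaner than the paper's summation. One small slip: $R\to M_3\to M_2\to M_1$ carries labels $n-3,n-1,n-2$, so it is not a temporal path; $M_1$ is reached from $R$ directly at time $2n-4$.

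\paragraph{The gap: swap deviations.} A deviation that buys $k$ labelled edges and drops $m$ of the player's own changes its construction cost by $(k-m)\alpha$. This is not positive when $m\geq k$. Your second step never handles such swaps, in particular a player just changing the label of an edge it already owns, and your first step only treats pure deletions.

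This case is fatal. Suppose $M_1$ replaces its purchase $(R,2n-4)$ by $(R,1)$. It then reaches $R$ and $M_2$ in one hop, $M_3,\dots,M_{n-2}$ in two hops through $R$, and $M_{n-1}$ in three hops through $M_{n-2}$. Its out-distance sum drops from $\binom{n-1}{2}+1$ to $2n-3$, a strict gain of $\frac{(n-2)(n-5)}{2}$ for every $n\geq6$, at zero extra construction cost. So the profile is not a Nash equilibrium for any $\alpha$, and no threshold on $\alpha$ can repair it.

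\paragraph{Further problems.} Your diagnosis of the main obstacle is off even for pure additions. The most useful purchase for $M_1$ is not an edge into the middle of the path. It is the early edge $(R,1)$ bought on top of its current edges, which saves the same $\frac{(n-2)(n-5)}{2}$. This exceeds $\frac14n^2$ once $n\geq13$.

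Your fallback of moving ownership away from $M_1$ does not help either. For $4\leq i\leq n-3$, any $M_i$ owning an edge can trade it for $(R,1)$. It then reaches every other vertex within three hops through $R$, using no other edge of its own. Its forward distances alone currently sum to $\binom{n-i}{2}$. Since the path edge $M_4M_5$ must be bought by $M_4$ or $M_5$, no ownership makes this labelled $dF_{n-1}$ a Nash equilibrium once $n$ is moderately large. With in-reachability costs, $M_{n-1}$ relabelling its hub edge above $2n-4$ is the mirror deviation.

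The paper's own proof has exactly the same blind spot: it only excludes buying extra edges and deleting edges. So following it does not close this step. A correct argument needs a genuinely different construction, in which players with large distance sums own no edge they can profitably relabel or swap.
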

\begin{proof}
The first equality follows from Lemma~\ref{lemma:basic}.
We consider the following (single) labelling of the diminished $(n-1)$-fan $dF_{n-1}$.
Let $v_0,v_1,v_2,v_3$ be the $4$ only vertices in $dF_{n-1}$ where:
they induce a cycle in $dF_{n-1}$, $v_0$ has degree $n-2$, $v_1$ and $v_3$ are neighbours of $v_0$; $v_1$ has degree $2$.
Let $v_1,v_2,\dots,v_{n-1}$ be the only path resulting from the removal of $v_0$ from $dF_{n-1}$.
For $1\leq i<n-1$, the edge between $v_i$ and $v_{i+1}$ is labelled with $n-3+i$.
The edge between $v_0$ and $v_1$ is labelled with $2n-4$.
For $3\leq i\leq n-1$, the edge between $v_0$ and $v_i$ is labelled with $n-i$.
Moreover, we consider the following strategy profile.
For $1\leq i<n-1$, the edge between $v_i$ and $v_{i+1}$ is bought by $v_i$.
Any neighbour of $v_0$ buys the edge to $v_0$.
Vertex $v_0$ does not buy any edge.
An example with $n=8$ is given in the rightmost part of Figure~\ref{fig:8vertices}.

Here, vertex $v_1$ would be interested in buying the edge to $v_{\frac{n}{2}}$.
However, when the atomic cost is too expensive, that is, when $\alpha\geq\frac{1}{4}n^2$, no vertex would buy extra edges.
Now, vertex $v_1$ cannot remove its edge to $v_2$ otherwise it cannot reach $v_2$, and the same hold for the other edge to $v_0$.
Likewise, vertex $v_2$ cannot remove its edge to $v_3$.
It can however reach $v_0$ by a temporal path via $v_1$.
Vertex $v_{n-1}$ cannot remove its edge to $v_0$ because this is the only route to $v_0$.
All other vertices $v_i$, for $3\leq i<n-1$, cannot remove the edge to the next $v_{i+1}$ in order to reach the higher $i$'s, while they cannot remove the edge to $v_0$ in order to reach the lower $i$'s.
Hence, the above defined strategy profile is a Nash equilibrium.

It remains to study the social cost of the latter Nash equilibrium.
The construction cost is clearly $\alpha(2n-4)$.
For $1\leq i<n-1$, let $k=n-1-i$, the sum of shortest temporal distances from $v_i$ to the other vertices is higher than $1+2+\dots+(n-1-i)=\frac{k(k-1)}{2}$.
Hence, the communication cost is higher than $\sum_{k=2}^{n-1}\frac{k(k-1)}{2}\geq\frac{1}{2}\sum_{k=2}^{n-1}(k-1)^2\geq\frac{1}{6}(n-2)^3$.
\end{proof}

\section{Conclusion and perspectives}
For temporal graphs, we show that the price of anarchy with respect to the shortest temporal distance when playing the dynamic network creation game can be proportional to the number of vertices.
This is a very bad situation, hinting that the construction of networks should be guided for a better communication world.
When considering the price of anarchy with respect to the two other definitions of temporal distances, foremost distance and fastest distance, the situation seems similar.
In both cases, the communication cost is usually much higher than the one for the shortest distance.
For instance, the hypercube $Q_3$ has $96$ as optimal communication cost with respect to the shortest distance, $156$ with respect to the fastest distance, and $386$ with respect to the foremost distance.
In order to better balance this explosion in the communication cost, better modelling works should be done about the optimal construction cost with respect to foremost or fastest distances.

\bibliography{papers}

\appendix
\clearpage
\section{Games minimizing the foremost distance and fastest distance}
When using the foremost temporal distance for the dynamic network creation game, we observe less ex aequo than with shortest temporal distance.
For instance, a brute-force enumeration on the hypercube $Q_3$ shows only $16$ distinct labellings to obtain the optimal communication cost, which is $386$.
Essentially, the fastest temporal distance seems to be similar to the foremost temporal distance, as long as dynamic network creation games are concerned, albeit with a smaller communication cost.
For the hypercube $Q_3$ the corresponding optimal communication cost is $156$.
\begin{figure}
    \centering
    \includegraphics[width=0.5\linewidth]{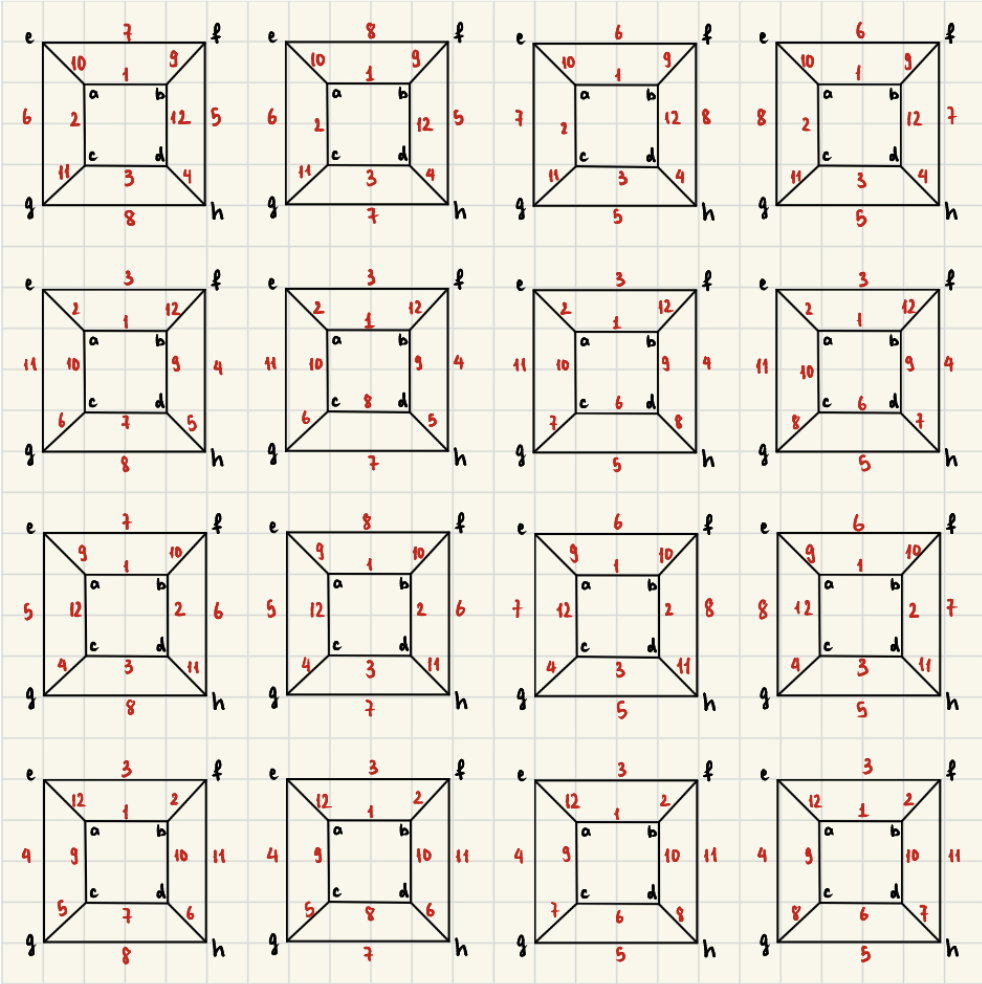}
    \caption{$16$ Optimal labellings for $Q_3$ with foremost temporal distance.}
    \label{fig:EAT-optimal}
\end{figure}

\end{document}